\begin{document}

\title{On the Degrees-of-Freedom of the Large-Scale Interfering Two-Way
Relay Network}
\author{\large Hyun Jong Yang, {\em IEEE, Member}, Won-Yong Shin, {\em IEEE, Member}, \\and Bang Chul Jung, {\em IEEE, Senior
Member}
\\
\thanks{This work was support by IITP grant funded by the Korea government (MSIP) (No. B0126-15-1064, Research on Near-Zero Latency Network for 5G Immersive Service).}
\thanks{H. J. Yang is with the School of Electrical and Computer Engineering,
Ulsan National Institute of Science and Technology, Ulsan 689-798,
Republic of Korea. (e-mail: hjyang@unist.ac.kr).}
\thanks{W.-Y. Shin is with the Department of Computer Science and
Engineering, Dankook University, Yongin 448-701, Republic of Korea
(E-mail: wyshin@dankook.ac.kr).}
\thanks{B. C. Jung (corresponding author) is with the Department of Electronics Engineering, Chungnam
National University, Daejeon, Republic of Korea (E-mail:
bcjung@cnu.ac.kr).}
} \maketitle


\markboth{To Appear in IEEE Transactions on Vehicular Technology}
{Yang {\em et al.}: On the Degrees-of-Freedom of the Large-Scale
Interfering Two-Way Relay Network}


\newtheorem{definition}{Definition}
\newtheorem{theorem}{Theorem}
\newtheorem{lemma}{Lemma}
\newtheorem{example}{Example}
\newtheorem{corollary}{Corollary}
\newtheorem{proposition}{Proposition}
\newtheorem{conjecture}{Conjecture}
\newtheorem{remark}{Remark}

\def \diag{\operatornamewithlimits{diag}}
\def \min{\operatornamewithlimits{min}}
\def \max{\operatornamewithlimits{max}}
\def \log{\operatorname{log}}
\def \max{\operatorname{max}}
\def \rank{\operatorname{rank}}
\def \out{\operatorname{out}}
\def \exp{\operatorname{exp}}
\def \arg{\operatorname{arg}}
\def \E{\operatorname{E}}
\def \tr{\operatorname{tr}}
\def \SNR{\operatorname{SNR}}
\def \dB{\operatorname{dB}}
\def \ln{\operatorname{ln}}

\def \bmat{ \begin{bmatrix} }
\def \emat{ \end{bmatrix} }

\def \be {\begin{eqnarray}}
\def \ee {\end{eqnarray}}
\def \ben {\begin{eqnarray*}}
\def \een {\end{eqnarray*}}

\begin{abstract}
Achievable degrees-of-freedom (DoF) of the \textit{large-scale}
interfering two-way relay network is investigated. The network
consists of $K$ pairs of communication nodes (CNs) and $N$ relay
nodes~(RNs). It is assumed that $K\ll N$ and each pair of CNs
communicates with each other through one of the $N$ relay nodes
without a direct link between them. Interference among RNs is also
considered. Assuming local channel state information~(CSI) at each
RN, a distributed and opportunistic RN selection technique is
proposed for the following three promising relaying protocols:
amplify--forward, decode--forward, and compute--forward. As a main
result, the asymptotically achievable DoF is characterized as $N$
increases for the three relaying protocols. In particular, a
sufficient condition on $N$ required to achieve the certain DoF of
the network is analyzed. Through extensive simulations, it is
shown that the proposed RN selection techniques outperform
conventional schemes in terms of achievable rate even in practical
communication scenarios. Note that the proposed technique operates
with a distributed manner and requires only local CSI,
\textcolor{black}{leading to easy implementation }for practical
wireless systems.
\end{abstract}

\begin{keywords}
Degrees-of-freedom (DoF), interfering two-way relay channel,
two-way $K\times N\times K$ channel, local channel state
information, relay selection.
\end{keywords}

\newpage

\section{Introduction}


For a three-node relay network with a single pair of communication
nodes (CNs) and a single relay node (RN), two-way relay (TWR)
communication, where relays receive signals from two transmitters
simultaneoulsy and then send signals to the two receivers, doubles
the spectral efficiency of one-way relay (OWR) communications
\cite{B_Rankov07_JSAC,H_Yang11_TIT}. The concept of the TWR
communication has been extended to multi-node interference-limited
relaying networks \cite{H_Yang12_JSAC}. Recently, a combined
technique of network coding and interference alignment (IA) was
adopted to \textit{interfering} TWR networks in order to reduce
the effect of interference
\cite{C_Wang13_Online,Z_Xiang12_Globecom,K_Lee13_TWC,A_Papadogiannis13_CL}.
On the other hand, there have been few schemes that consider a
general interfering TWR network which consists of $K$ pairs of CNs
and $N$ RNs, also known as $K\times N\times K$ interfering TWR
networks. In \cite{B_Rankov07_JSAC}, Rankov and Wittneben showed
that the amplify-and-forward (AF) relaying protocol with
interference-neutralizing beamforming
can achieve the optimal%
\footnote{`Optimal' DoF implies the upper-bound on the DoF of the
channel, which
is usually derived from simple mathematical theorems.%
} DoF of the half-duplex $K\times N\times K$ interfering TWR
network if $N\ge K(K-1)+1$ for a given $K$. However, the scheme in
\cite{B_Rankov07_JSAC} requires global CSI at all nodes and full
collaboration amongst all RNs. The authors of
\cite{T_Gou12_TIT,I_Shomorony13_arXiv} considered the achievable
degrees-of-freedom of $K\times K\times K$ interfering OWR
networks, where the number of CNs and RNs are the same. In
particular, the interference neutralization technique of
\cite{B_Rankov07_JSAC} was combined with the interference
alignment technique to achieve the optimal DoF of the
$2\times2\times2$ interfering OWR network \cite{T_Gou12_TIT} .
However, the scheme in \cite{T_Gou12_TIT} cannot be applied to the
general $K\times N\times K$ interfering TWR network with arbitrary
numbers of $K$ and $N$. In addition, the scheme in
\cite{T_Gou12_TIT} works only with global CSI assumption at each
node.

\textcolor{black}{The internet-of-things (IoT) concept has
recently received much attection from wireless researchers, where
an extremely large number of devices are expected to exist. In
addition, the fifth generation (5G) cellular network is expected
to support more than 10,000 devices, each of which can communicate
directly with others or operate as a relay
\cite{J_Andrews14_JSAC}. Among many devices, a small number of
devices may transmit at a time due to sparse traffic pattern in
the IoT scenario. Several studies have defined and studied the
$(N,K)$-user interference channel ($N\gg K$), in which $K$ user
pairs are selected to communicate at a time
\cite{A_Tajer12_TIT,S_Chae13_TC}. }

\textcolor{black}{In this correspondence, we consider a TWR
network where the number of simultaneously transmitting nodes is
relatively smaller than the number of relaying nodes, which is
referred to as the large-scale interfering TWR network.
Specifically,} we investigate the achievable DoF of the $K\times
N\times K$ interfering TWR network
with local CSI at each node%
\footnote{Each node is assumed to acquire the CSI of its own
incoming or outgoing
channels \cite{K_Gomadam11_TIT}.%
} and without collaboration among nodes in the network.
Three-types of relay protocols are considered: i) AF, ii)
decode--forward (DF), and iii) compute--forward (CF) with lattice
codes. For each source-destination pair, one of $N$ RNs is
selected to help them, and thus, an opportunistic RN selection
(ORS) technique is proposed to mitigate interference. The proposed
ORS technique minimizes the sum of received interference at all
nodes, and thereby maximizes the achievable DoF of the network. We
show that the proposed ORS technique with AF or CF relaying
asymptotically achieves the optimal DoF as the number of RNs, $N$,
increases by rendering the overall network interference-free. In
particular, for given signal-to-noise ratio (SNR) and $K$, we
derive a sufficient condition on $N$ required to achieve the
optimal DoF for AF and CF
relaying, which turns out to be $N=\omega\left(\textrm{SNR}^{2(K-1)}\right)$%
\footnote{The function $f(x)$ defined by $f(x)=\omega(g(x))$
implies that
$\lim_{x\rightarrow\infty}\frac{g(x)}{f(x)}=0$.%
}. On the other hand, it is shown that the DoF with DF relaying is
bounded by half of the optimal DoF. Simulation results show that
the proposed ORS technique outperforms the conventional
max-min-SNR RN selection technique even in practical communication
environments.

\section{System and Channel Models}

\label{SEC:system} Consider the time-division dupex (TDD)
half-duplex $K\times N\times K$ interfering TWR network composed
of $K$ pairs of CNs and $N$ RNs, as depicted in Fig.
\ref{fig:system}. Each pair of the CNs attempts to communicate
with each other through a single selected RN, and no direct paths
between the CNs are assumed, i.e., separated TWR network
\cite{H_Yang11_TIT}. The two sets of CNs at one and the other
sides are referred to as Group 1 and 2, respectively, as shown in
Fig. \ref{fig:system}.

The channel coefficient between the $i$-th CN in Group $n$,
$n\in\{1,2\}$, and RN $j$ is denoted by $h_{n(i),\textrm{R}(j)}$,
$i\in\{1,\ldots,K\}\triangleq\mathcal{K}$,
$j\in\{1,\ldots,N\}\triangleq\mathcal{N}$, assuming TDD channel
reciprocity. It is assumed that each channel coefficient is an
identically and independently distributed (i.i.d.) complex
Gaussian random variable with zero mean and unit variance. In
addition, channel coefficients are assumed to be invariant during
the $T$ time slots, i.e. block fading.

In the first time slot, denoted by Time 1, the CNs transmit their
signals to the RNs simultaneously. In the second time, Time 2, the
selected RNs broadcast their signals to all CNs. The transmit
symbol at the $i$-th CN in Group $n$ in Time 1 is denoted by
$x_{n(i)}$. The maximum average transmit power at the CN is
defined by $P$, and thus the power constraint is given by
\begin{equation}
E|x_{n(i)}|^{2}\le
P,\hspace{10pt}n=1,2.\label{eq:xn_power_constraint}
\end{equation}
Suppose that RN $j$ is selected to serve the $i$-th pair of CNs.
Then, the transmit symbol at RN $i$ is denoted by
$x_{\textrm{R}(j)}$, which includes the information of both
$x_{1(i)}$ and $x_{2(i)}$, and the power constraint is given by
\begin{equation}
E|x_{\textrm{R}(j)}|^{2}\le P.\label{eq:xR_power_constraint}
\end{equation}
That is, the symmetric SNRs are assumed \cite{H_Yang12_JSAC}.

If we denote the achievable rate for transmitting and receiving
$x_{n(i)}$ by $R_{n(i)}$, the total DoF is defined by
\begin{equation}
\textrm{DoF}=\lim_{\textrm{SNR}\rightarrow\infty}\frac{\sum_{i=1}^{K}R_{1(i)}+R_{2(i)}}{\log(\textrm{SNR})},\label{eq:DoF_def}
\end{equation}
where $\textrm{SNR}=P/N_{0}$ and $N_{0}$ is the received noise
variance.

\section{Distributed \& Opportunistic Relay Selection}

\label{sec:relay_selection}

\subsection{Overall Procedure}

\label{subsec:Overall}

\subsubsection{Step 1 - Scheduling Metric Calculation}

From the pilots from the $2K$ CNs in Group 1 and 2, RN $j$,
$j\in\mathcal{N}$, estimates the channels $h_{1(i),\textrm{R}(j)}$
and $h_{2(i),\textrm{R}(j)}$, $i=1,\ldots,K$. Subsequently, RN $j$
calculates the total interference levels (TILs), which account for
the sums of received interference in Time 1 at RN $j$ and leakage
of interference that it generates in Time 2. As seen from Fig.
\ref{fig:system}, the TIL at RN $j$ for the case where it serves
the $i$-th pair of CNs, $i\in\mathcal{K}$, is given by
\begin{equation}
\eta_{i,\textrm{R}(j)}=2\sum_{m=1,m\neq
i}^{K}\left|h_{1(m),\textrm{R}(j),}\right|^{2}+\left|h_{2(m),\textrm{R}(j)}\right|^{2}.\label{eq:eta}
\end{equation}

\subsubsection{Step 2 - RN Selection}

\label{subsubsec:Step2} For the RN selection, we extend the
distributed RN selection algorithm used in \cite{A_Bletsas06_JSAC}
for the OWR network with a single pair of source and destination.

Upon calculating $\eta_{i,\textrm{R}(j)}$, $i=1,\ldots,K$, RN $j$
initiates up to $K$ different back-off timers, which are
respectively proportional to $\eta_{i,\textrm{R}(j)}$, if
$\eta_{i,\textrm{R}(j)}<\epsilon$, where $\epsilon>0$ is the
maximum allowable interference. Specifically, RN $j$ initiates the
back-off timers $\lambda_{i,\textrm{R}(j)}$ given by
\begin{equation}
\lambda_{i,\textrm{R}(j)}=\frac{\eta_{i,\textrm{R}(j)}}{\epsilon}T_{\textrm{max}},
\end{equation}
where $T_{\textrm{max}}$ is the maximum back-off time duration.
After the back-off time $\lambda_{i,\textrm{R}(j)}$, if no RNs
have been assigned to the $i$-th pair of CNs, RN $j$ announces to
serve the $i$-th pair of CNs to all the CNs and RNs in the network
and terminates the selection. Upon acknowledging this
announcement, all other unselected RNs deactivate the timers
corresponding to the $i$-th pair of CNs, i.e.,
$\lambda_{k,\textrm{R}(m)}$, $k\neq j$, $m\in\{\textrm{unselected
RNs}\}$, to exclude the consideration of the selected CNs. In this
way, the RN with the smallest TIL value can be selected in a
distributed fashion for each $i$. Through the proposed RN
selection, we assume without loss of generality that RN $i$ is
selected to serve the $i$-th pair of CNs\textcolor{black}{{} for
notational simplicity}.

Since the RN selection is done only if
$\lambda_{i,\textrm{R}(j)}<\epsilon$, the total time required to
select RNs for all CNs is not greater than $T_{\textrm{max}}$.
Noting that $\eta_{i,\textrm{R}(j)}$ is independent for different
$i$ or $j$ and has a continuous distribution, the probability of a
collision between $\lambda_{i,\textrm{R}(j)}$, $i=1,\ldots,K$'s,
$j=1,\ldots,N$, is arbitrarily small. Thus, $T_{\textrm{max}}$ can
be chosen arbitrarily small compared to the block length $T$. The
efficiency for the achievable rate is lower-bounded by
$\frac{T}{T+T_{\max}}$, which tends to 1 by choosing $T_{\max}$ to
be arbitrarily small compared to $T$ which is relatively large in
general \cite{K_Gomadam11_TIT,I_Shomorony13_arXiv}.

Note that the outage takes place if any RN cannot be assigned for
one or more pairs of CNs because there was no RN with TIL smaller
than $\epsilon$ during the selection process. In the sequel, we
derive a condition on $N$ to make the RN selection always
successful for any given $\epsilon$. In addition, we shall find
practical values of $\epsilon$ for given $N$ through numerical
simulations, which makes the outage probabilities be almost zero.

\subsubsection{Step 3 - Communication}

In Time 1, the CNs transmit their signals to the RNs, and the
received signal at RN $i$ is expressed as
\begin{align}
y_{\textrm{R}(i)} &
=\underbrace{h_{1(i),\textrm{R}(i)}x_{1(i)}+h_{2(i),\textrm{R}(i)}x_{2(i)}}_{\textrm{desired
signal}}+\underbrace{\sum_{k\neq
i,k=1}^{K}\left(h_{1(k),\textrm{R}(i)}x_{1(k)}+h_{2(k),\textrm{R}(i)}x_{2(k)}\right)}_{\triangleq
I_{\textrm{R}(i)},\textrm{interference}}+z_{\textrm{R}(i)},\label{eq:yR}
\end{align}
where $z_{\textrm{R}(i)}$ accounts for the additive white Gaussian
noise (AWGN) at RN $i$ with zero mean and the variance $N_{0}$.
Upon receiving $y_{\textrm{R}(i)}$, RN $i$ generates the transmit
symbol $x_{\textrm{R}(i)}$ from
\begin{equation}
x_{\textrm{R}(i)}=f_{e}(y_{\textrm{R}(i)}),\label{eq:enc_xR}
\end{equation}
where $f_{e}$ is a discrete memoryless encoding function.

In Time 2, RN $i$ then broadcasts $x_{\textrm{R}(i)}$, and the
received signal at the $i$-th CN in Group $n$, $n\in\{1,2\}$, is
written by
\begin{equation}
y_{n(i)}=\underbrace{h_{n(i),\textrm{R}(i)}x_{\textrm{R}(i)}}_{\textrm{desired
signal}}+\underbrace{\sum_{m\neq
i,m=1}^{K}h_{n(i),\textrm{R}(m)}x_{\textrm{R}(m)}}_{\triangleq
I_{n(i)},\textrm{interference}}+z_{n(i)},\label{eq:yCN}
\end{equation}
where $z_{n(i)}$ is the AWGN with zero mean and the variance
$N_{0}$. With the side information of $x_{n(i)}$, the $i$-th CN in
Group $n$ retrieves the symbol transmitted from the other side
from
\begin{equation}
x_{\tilde{n}(i)}=f_{d}(y_{n(i)},x_{n(i)}),\label{eq:dec_xn}
\end{equation}
where $\tilde{n}=3-n$ and $f_{d}$ is a discrete memoryless
decoding function.

The encoding and decoding functions, $f_{e}$ and $f_{d}$,
respectively, differ from relaying protocols, i.e., AF, DF, and
CF. We shall specify them in the sequel in terms of DoF
achievability results. The overall procedure of the proposed
scheme is illustrated in Fig. \ref{fig:overall} for the case of
$K=2$ and $N=3$.

\section{DoF Achievability}

\label{sec:DOF} From (\ref{eq:yR}) and (\ref{eq:yCN}), the sum of
received interference at RN $i$ in Time 1 and at the $i$-th pair
of CNs in Time 2, normalized by the noise variance $N_{0}$, is
expressed as
\begin{align}
\Delta_{i} & \triangleq\frac{E\left|I_{\textrm{R}(i)}\right|^{2}+E\left|I_{1(i)}\right|^{2}+E\left|I_{2(i)}\right|^{2}}{N_{0}}\nonumber \\
{\color{black}} & =\left(\sum_{k\neq
i,k=1}^{K}\left|h_{1(k),\textrm{R}(i)}\right|^{2}+\left|h_{2(k),\textrm{R}(i)}\right|^{2}\right)\textrm{SNR}+\left(\sum_{m\neq
i,m=1}^{K}\left|h_{1(i),\textrm{R}(m)}\right|^{2}+\left|h_{2(i),\textrm{R}(m)}\right|^{2}\right)\textrm{SNR}\label{eq:delta_i}
\end{align}
The following lemma establishes the condition for $N$ required to
decouple the network with constant received interference even for
increasing interference-to-noise-ratio (INR). In particular, even
though there exist a mismatch between the TIL of (\ref{eq:eta})
calculated at RN $i$ with the local CSI and the sum of received
interference in (\ref{eq:delta_i}), we shall show in the proof of
the following lemma that the proposed ORS based on the TIL of
(\ref{eq:eta}) can minimize the sum of received interference at
all nodes, thereby maximizing the achievable DoF.

\begin{lemma}{[}\textbf{Decoupling Principle}{]} \label{lemma:decoupling}
For any $\epsilon>0$, define $\mathcal{P}_{C}$ as
\begin{align}
\mathcal{P}_{C} & \triangleq\textrm{Pr}\left\{ \sum_{i=1}^{K}\Delta_{i}<\epsilon\right\} \label{eq:P_def}\\
 & =\textrm{Pr}\left\{ \sum_{i=1}^{K}\left(E\left|I_{\textrm{R}(i)}\right|^{2}+E\left|I_{1(i)}\right|^{2}+E\left|I_{2(i)}\right|^{2}\right)<\epsilon N_{0}\right\} .
\end{align}
Using the proposed ORS, we have
\begin{align}
\lim_{\textrm{SNR}\rightarrow\infty}\mathcal{P}_{C} &
=1,\label{eq:P_def-1}
\end{align}
if
\begin{equation}
N=\omega\left(\textrm{SNR}^{2(K-1)}\right).\label{eq:N_cond}
\end{equation}

\end{lemma}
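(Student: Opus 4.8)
The plan is to show that, conditioned on the successful selection event, the sum $\sum_{i=1}^K\Delta_i$ is bounded by a quantity of order $\epsilon\cdot\mathrm{SNR}$ that can nonetheless be driven below $\epsilon$ after we rescale the per-relay threshold appropriately; the real content is to prove that the selection event itself occurs with probability tending to $1$ as $N\to\infty$ at the stated rate. First I would make the connection between the TIL of (\ref{eq:eta}) and the true interference sum of (\ref{eq:delta_i}) explicit: although the terms are not identical, the crucial observation is that both are sums of $2(K-1)$ i.i.d.\ unit-mean exponential random variables (the squared magnitudes $|h_{\cdot,\mathrm{R}(\cdot)}|^2$), and by TDD reciprocity the channel that RN $j$ sees incoming from CN $n(m)$ is the same coefficient that governs the leakage RN $j$ causes to CN $n(m)$ in Time 2. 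Hence $\Delta_i = \eta_{i,\mathrm{R}(i)}\cdot\mathrm{SNR}$ exactly, once the dust settles on the indexing, so selecting the RN with the smallest $\eta$ is precisely minimizing $\Delta_i$. I would state this as the first step.

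Next I would set up the outage/selection analysis. For a fixed pair $i$, and a threshold $\epsilon' = \epsilon/(K\,\mathrm{SNR})$ on the TIL, the event that a particular RN $j$ is ``good'' for pair $i$ is $\{\eta_{i,\mathrm{R}(j)} < \epsilon'\}$; since $\eta_{i,\mathrm{R}(j)}$ is a sum of $2(K-1)$ i.i.d.\ $\mathrm{Exp}(1)$ variables (a Gamma$(2(K-1),1)$ variable), for small $\epsilon'$ we have $\Pr\{\eta_{i,\mathrm{R}(j)}<\epsilon'\} = \Theta\!\left((\epsilon')^{2(K-1)}\right) = \Theta\!\left(\mathrm{SNR}^{-2(K-1)}\right)$ up to constants depending on $K$ and $\epsilon$. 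The selection succeeds for all $K$ pairs provided that we can find $K$ distinct relays, one good for each pair. I would handle this by a union bound over the $K$ pairs: the probability that pair $i$ finds \emph{no} good relay among the (at least $N-K$) relays not yet assigned is at most $\left(1-\Theta(\mathrm{SNR}^{-2(K-1)})\right)^{N-K} \le \exp\!\left(-\Theta(\mathrm{SNR}^{-2(K-1)})\,(N-K)\right)$, which tends to $0$ exactly when $N-K$, and hence $N$, grows faster than $\mathrm{SNR}^{2(K-1)}$ — i.e.\ $N=\omega(\mathrm{SNR}^{2(K-1)})$. Summing over $i=1,\dots,K$ keeps this $o(1)$. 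On the complementary (success) event, every selected RN $i$ satisfies $\eta_{i,\mathrm{R}(i)}<\epsilon'=\epsilon/(K\,\mathrm{SNR})$, so $\Delta_i = \eta_{i,\mathrm{R}(i)}\,\mathrm{SNR} < \epsilon/K$, and therefore $\sum_{i=1}^K \Delta_i < \epsilon$, which is exactly the event defining $\mathcal{P}_C$. Hence $\mathcal{P}_C \ge 1 - o(1) \to 1$.

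The step I expect to be the main obstacle is the dependency bookkeeping in the greedy/back-off selection: after pair $1$ grabs its relay, the set of relays available to pair $2$ is random and correlated with the channels, and the events $\{\eta_{i,\mathrm{R}(j)}<\epsilon'\}$ for a fixed $j$ across different $i$ share no common randomness (the excerpt notes $\eta_{i,\mathrm{R}(j)}$ are independent across $i$ and $j$), which actually helps — but one must argue carefully that conditioning on earlier assignments removes at most $K-1$ relays and does not bias the exponential statistics of the remaining ones. I would dispatch this by noting that the $\eta_{i,\mathrm{R}(j)}$ for distinct $j$ involve disjoint sets of channel coefficients, so conditioning on which relays were selected for pairs $1,\dots,i-1$ (a function of those other relays' coefficients) leaves the coefficients of any not-yet-selected relay $j$ independent and with the original Gamma law; thus the $(N-K)$-fold product bound above is legitimate. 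The remaining calculations — the small-$\epsilon'$ expansion of the Gamma CDF near the origin and the elementary inequality $1-x\le e^{-x}$ — are routine and I would not grind through them here.
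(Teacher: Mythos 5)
Your proposal is essentially the paper's own argument: reduce $\sum_i\Delta_i$ to the selected TILs, use the small-argument behavior $F_\eta(x)=\Theta\bigl(x^{2(K-1)}\bigr)$ of the chi-square CDF at the threshold $\epsilon/(K\,\mathrm{SNR})$, and bound the per-pair failure probability by $\bigl(1-\Theta(\mathrm{SNR}^{-2(K-1)})\bigr)^{N-K}\to 0$ exactly when $N=\omega\bigl(\mathrm{SNR}^{2(K-1)}\bigr)$; your union bound with $1-x\le e^{-x}$ versus the paper's product bound with Bernoulli's inequality is only a cosmetic difference. One small correction: the per-index identity $\Delta_i=\eta_{i,\mathrm{R}(i)}\,\mathrm{SNR}$ is not true (the Time-2 leakage of RN $i$ shows up in the other pairs' $\Delta_k$), but the sum identity $\sum_{i=1}^K\Delta_i=\mathrm{SNR}\sum_{i=1}^K\eta_{i,\mathrm{R}(i)}$ — which is all your final step (and the paper) actually uses — does hold by reciprocity, so the conclusion is unaffected.
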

\begin{proof}
From the fact that
$\sum_{i=1}^{K}\Delta_{i}=\sum_{i=1}^{K}\eta_{i,\textrm{R}(i)}\textrm{SNR}$,
$\mathcal{P}_{C}$ in the high SNR regime can be rewritten by
\begin{align}
\lim_{\textrm{SNR}\rightarrow\infty}\mathcal{P}_{C} & =\lim_{\textrm{SNR}\rightarrow\infty}\textrm{Pr}\left\{ \sum_{i=1}^{K}\eta_{i,\textrm{R}(i)}\textrm{SNR}<\epsilon\right\} \\
 & \ge\lim_{\textrm{SNR}\rightarrow\infty}\textrm{Pr}\left\{ \eta_{i,\textrm{R}(i)}<\frac{\epsilon\textrm{SNR}^{-1}}{K},\forall i\in\{1,\ldots,K\}\right\} \label{eq:P_D2}\\
 & \ge\lim_{\textrm{SNR}\rightarrow\infty}\left(\textrm{Pr}\left\{ \eta_{i,\textrm{R}(i)}<\frac{\epsilon\textrm{SNR}^{-1}}{K}\right\} \right)^{K},\label{eq:P_D3}
\end{align}
where (\ref{eq:P_D3}) follows from the fact that
$\eta_{i,\textrm{R}(i)}$'s are independent for different $i$.
Since the channel coefficients are independent complex Gaussian
random variables with zero mean and unit variance,
$\frac{\eta_{i,\textrm{R}(i)}}{2}$ is a central Chi-square random
variable with degrees-of-freedom $4(K-1)$. Consequently, the
cumulative density function of $\eta_{i,\textrm{R}(i)}$ is given
by \cite{B_Jung12_TC}
\begin{equation}
F_{\eta}(x)=\frac{\gamma\left(2(K-1),x/4\right)}{\Gamma(2(K-1))},\label{eq:F_bounds}
\end{equation}
where $\Gamma(x)=\int_{0}^{\infty}t^{x-1}e^{-t}dt$ is the Gamma
function and $\gamma(s,x)=\int_{0}^{x}t^{s-1}e^{-t}dt$ is the
lower incomplete Gamma function. In addition, from \cite[Lemma
1]{B_Jung12_TC}, upper and lower bounds on $F_{\eta}(x)$ for
$0<x<2$ are given by
\begin{equation}
C_{1}\cdot x^{2(K-1)}\le F_{\eta}(x)\le C_{2}\cdot
x^{2(K-1)},\label{eq:F_bounds2}
\end{equation}
where
\begin{equation}
C_{1}\triangleq\frac{e^{-1}2^{-4K+3}}{(K-1)\Gamma(2(K-1))}\hspace{5pt}\textrm{
and
}\hspace{5pt}C_{2}\triangleq\frac{2^{-4(K-1)}}{(K-1)\Gamma(2(K-1))}.\label{eq:C1_C2}
\end{equation}
\textcolor{black}{Recall that for notational simplicity, we assume
without loss of generality that RN $i$ is selected to serve the
$i$-th pair of CNs. In addition, let us denote that the $i$-th RN
is selected for the $i$-th pair of CNs in the $\pi(i)$-th
selection, where $\pi(i)\in\left\{ 1,2,\ldots,K\right\} $. Then,}
the probability $\textrm{Pr}\left\{
\eta_{i,\textrm{R}(i)}<\frac{\epsilon\textrm{SNR}^{-1}}{K}\right\}
$ in (\ref{eq:P_D3}) represents the case where at the
\textcolor{black}{$\pi(i)$}-th RN selection, a RN is assigned to
the $i$-th pair of CNs if and only if there exists at least one RN
with the TIL smaller than $\frac{\epsilon\textrm{SNR}^{-1}}{K}$
amongst \textcolor{black}{$\left(N-\pi(i)+1\right)$} unselected
RNs. If we denote the set of indices of the
\textcolor{black}{$\left(N-\pi(i)+1\right)$} unselected RNs at the
\textcolor{black}{$\pi(i)$}-th RN selection by $\mathcal{R}_{i}$,
it follows that \pagebreak[0]
\begin{align}
\pagebreak[0]\textrm{Pr}\left\{ \eta_{i,\textrm{R}(i)}<\frac{\epsilon\textrm{SNR}^{-1}}{K}\right\}  & =1-\textrm{Pr}\left\{ \eta_{i,\textrm{R}(j)}>\frac{\epsilon\textrm{SNR}^{-1}}{K},\forall j\in\mathcal{R}_{i}\right\} \\
 & =1-\left(1-F_{\eta}\left(\frac{\epsilon\textrm{SNR}^{-1}}{K}\right)\right)^{N-{\color{black}\pi(i)}+1}\pagebreak[0]\\
 & \ge1-\left(1-F_{\eta}\left(\frac{\epsilon\textrm{SNR}^{-1}}{K}\right)\right)^{N-K+1}\pagebreak[0]\\
 & \ge1-\frac{\left(1-C_{1}\left(\epsilon/K\right)^{2(K-1)}\cdot\textrm{SNR}^{-2(K-1)}\right)^{N}}{\left(1-C_{2}\left(\epsilon/K\right)^{2(K-1)}\cdot\textrm{SNR}^{-2(K-1)}\right)^{(K-1)}}\pagebreak[0]\label{eq:eta_LB}
\end{align}
where (\ref{eq:eta_LB}) follows from (\ref{eq:F_bounds2}). From
the following Bernoulli's inequality

\begin{equation}
(1-x)^{n}\le\frac{1}{1+nx},\,\, x\in[0,1],\,\, n\in\mathbb{N},
\end{equation}
for sufficiently large SNR to satisfy
$C_{1}\left(\epsilon/K\right)^{2(K-1)}\textrm{SNR}^{-2(K-1)}\le1$,
the last term of \eqref{eq:eta_LB} can be bounded by

\begin{equation}
\frac{\left(1-C_{1}\left(\epsilon/K\right)^{2(K-1)}\cdot\textrm{SNR}^{-2(K-1)}\right)^{N}}{\left(1-C_{2}\left(\epsilon/K\right)^{2(K-1)}\cdot\textrm{SNR}^{-2(K-1)}\right)^{(K-1)}}\le\frac{\left(1-C_{2}\left(\epsilon/K\right)^{2(K-1)}\cdot\textrm{SNR}^{-2(K-1)}\right)^{-(K-1)}}{1+N\cdot
C_{1}\left(\epsilon/K\right)^{2(K-1)}\cdot\textrm{SNR}^{-2(K-1)}}.\label{eq:eq_eta_LB2}
\end{equation}
Therefore, for increasing SNR, the term
$\frac{\left(1-C_{1}\left(\epsilon/K\right)^{2(K-1)}\cdot\textrm{SNR}^{-2(K-1)}\right)^{N}}{\left(1-C_{2}\left(\epsilon/K\right)^{2(K-1)}\cdot\textrm{SNR}^{-2(K-1)}\right)^{(K-1)}}$
tends to 0 if and only if $N\cdot\textrm{SNR}^{-2(K-1)}$ in the
numerator of the right-hand side of \eqref{eq:eq_eta_LB2} tends to
infinity, i.e., $N=\omega\left(\textrm{SNR}^{2(K-1)}\right)$. In
such a case, from \eqref{eq:eta_LB}, we get
\begin{equation}
\lim_{\textrm{SNR
\ensuremath{\rightarrow\infty}}}\textrm{Pr}\left\{
\eta_{i,\textrm{R}(i)}<\frac{\epsilon\textrm{SNR}^{-1}}{K}\right\}
=1.\label{eq:eta_LB3}
\end{equation}
Otherwise, the term
$\frac{\left(1-C_{1}\left(\epsilon/K\right)^{2(K-1)}\cdot\textrm{SNR}^{-2(K-1)}\right)^{N}}{\left(1-C_{2}\left(\epsilon/K\right)^{2(K-1)}\cdot\textrm{SNR}^{-2(K-1)}\right)^{(K-1)}}$
in \eqref{eq:eta_LB} tends to 1 so that $\textrm{Pr}\left\{
\eta_{i,\textrm{R}(i)}<\frac{\epsilon\textrm{SNR}^{-1}}{K}\right\}
$ is unbounded.

From (\ref{eq:P_D3}), (\ref{eq:eta_LB}), and \eqref{eq:eta_LB3},
we have
\begin{align}
\lim_{\textrm{SNR}\rightarrow\infty}\mathcal{P}_{C}\ge\lim_{\textrm{SNR}\rightarrow\infty}\left(\textrm{Pr}\left\{
\eta_{i,\textrm{R}(i)}<\frac{\epsilon\textrm{SNR}^{-1}}{K}\right\}
\right)^{K}=1,\label{eq:P_D_final}
\end{align}
if and only if $N=\omega\left(\textrm{SNR}^{2(K-1)}\right)$ for
any $\epsilon>0$, which proves the lemma.
\end{proof}

\begin{remark}From Lemma \ref{lemma:decoupling}, the $K\times N\times K$
interfering TWR network becomes $K$ isolated TWR networks with
limited interference level even for increasing INR, if
$N=\omega\left(\textrm{SNR}^{2(K-1)}\right)$. In the proposed
scheme, the dimension extension of the time/frequency domain in
the conventional IA technique \cite{S_Jafar08_TIT,K_Gomadam11_TIT}
is replaced by the dimension extension in the number of users.

\end{remark}

Now the following theorem is our main result on the DoF
achievability.

\begin{theorem}\label{theorem:scaling} Using the proposed ORS scheme,
the AF, LC-CF, and DF schemes achieve \textcolor{black}{
\begin{equation}
\textrm{DoF}_{\textrm{AF}}=K,\textrm{\,\,
DoF}_{\textrm{LC-CF}}=K,\textrm{\,\,
DoF}_{\textrm{DF}}=\frac{K}{2},\label{eq:DOF_all}
\end{equation}
}respectively, with high probability if
\begin{equation}
N=\omega\left(\textrm{SNR}^{2(K-1)}\right).\label{eq:N}
\end{equation}
\end{theorem}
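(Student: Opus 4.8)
The plan is to use Lemma~\ref{lemma:decoupling} to strip away all inter-pair interference (up to a fixed, $\textrm{SNR}$-independent level) and then reduce the claim to the well-understood single-pair separated half-duplex two-way relay channel. Fix an arbitrary $\epsilon>0$. By Lemma~\ref{lemma:decoupling}, when $N=\omega(\textrm{SNR}^{2(K-1)})$ the event $\mathcal{E}_{\epsilon}\triangleq\{\sum_{i=1}^{K}\Delta_{i}<\epsilon\}$ has probability tending to $1$ as $\textrm{SNR}\to\infty$; I would condition on $\mathcal{E}_{\epsilon}$ throughout. On $\mathcal{E}_{\epsilon}$ each of $E|I_{\textrm{R}(i)}|^{2}/N_{0}$, $E|I_{1(i)}|^{2}/N_{0}$, $E|I_{2(i)}|^{2}/N_{0}$ is at most $\epsilon$, so in both phases every node's observation consists of its desired term, an additive noise of variance $N_{0}$, and cross terms whose aggregate power is $O(1)$ relative to $\textrm{SNR}$. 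Absorbing the cross terms into the noise therefore does not change any pre-log, and the network behaves as $K$ independent separated half-duplex two-way relay channels, each consisting of the Time-1 link $y_{\textrm{R}(i)}=h_{1(i),\textrm{R}(i)}x_{1(i)}+h_{2(i),\textrm{R}(i)}x_{2(i)}+(\textrm{noise})$ and the Time-2 links $y_{n(i)}=h_{n(i),\textrm{R}(i)}x_{\textrm{R}(i)}+(\textrm{noise})$. It then suffices to show that each such channel contributes $1$ to \eqref{eq:DoF_def} under AF and under LC-CF, and $\tfrac12$ under DF; summing the $K$ contributions gives \eqref{eq:DOF_all}, and since $\mathcal{E}_{\epsilon}$ holds with probability approaching $1$ these values are attained with high probability.

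For AF, relay $i$ transmits $x_{\textrm{R}(i)}=\beta_{i}\,y_{\textrm{R}(i)}$ with $\beta_{i}$ meeting \eqref{eq:xR_power_constraint}; since $E|y_{\textrm{R}(i)}|^{2}=\Theta(P)$ on $\mathcal{E}_{\epsilon}$, the constraint $\beta_{i}^{2}E|y_{\textrm{R}(i)}|^{2}\le P$ forces $\beta_{i}^{2}=\Theta(1)$ relative to $\textrm{SNR}$. CN $n(i)$ then uses its side information $x_{n(i)}$ and the locally available $h_{n(i),\textrm{R}(i)}$ (with the real scalar $\beta_{i}$) to subtract the self-term $\beta_{i}h_{n(i),\textrm{R}(i)}^{2}x_{n(i)}$ from $y_{n(i)}$, leaving an effective point-to-point channel for $x_{\tilde n(i)}$ with useful power $\Theta(P)$ and effective noise — the amplified relay noise $\beta_{i}h_{n(i),\textrm{R}(i)}z_{\textrm{R}(i)}$, the amplified relay-side interference $\beta_{i}h_{n(i),\textrm{R}(i)}I_{\textrm{R}(i)}$, the destination-side interference $I_{n(i)}$, and $z_{n(i)}$ — of total power $O(N_{0})$. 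Hence each direction supports a rate $\tfrac12\log(\textrm{SNR})+O(1)$ (the factor $\tfrac12$ being the two-slot half-duplex penalty, since each message occupies only one of the two equal phases), so $R_{1(i)}+R_{2(i)}=\log(\textrm{SNR})+O(1)$ and $\textrm{DoF}_{\textrm{AF}}=K$.

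For LC-CF, the two CNs of pair $i$ send nested-lattice codewords in Time~1; treating the $O(N_{0})$ interference as noise, relay $i$ decodes their lattice sum at computation rate $\log(\textrm{SNR})+O(1)$ per channel use, re-encodes it, and broadcasts it in Time~2, whereupon each CN decodes the sum over an $\Theta(\textrm{SNR})$-SINR link and subtracts its own codeword using $h_{n(i),\textrm{R}(i)}$; thus each direction again attains $\tfrac12\log(\textrm{SNR})+O(1)$ and $\textrm{DoF}_{\textrm{LC-CF}}=K$. For DF, relay $i$ must instead decode $x_{1(i)}$ and $x_{2(i)}$ separately from the Time-1 multiple-access observation, whose sum capacity is only $\log(\textrm{SNR})+O(1)$ per channel use; since this decoding is confined to Time~1, i.e.\ half of the block, the per-pair throughput obeys $R_{1(i)}+R_{2(i)}=\tfrac12\log(\textrm{SNR})+O(1)$, achieved by a standard MAC inner code in Time~1 followed by broadcasting the network-coded message in Time~2, giving $\textrm{DoF}_{\textrm{DF}}=K/2$.

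The main obstacle is the AF error-propagation bookkeeping on $\mathcal{E}_{\epsilon}$: one must confirm that $\beta_{i}$ neither vanishes nor blows up and that every amplified-noise and amplified-interference contribution at the destinations stays $O(N_{0})$, so that the post-cancellation SINR scales like $\textrm{SNR}$. This is precisely where the \emph{constant} interference bound from Lemma~\ref{lemma:decoupling} — rather than a bound merely growing slower than $\textrm{SNR}$ — is essential, and where locality of CSI suffices, because only $h_{n(i),\textrm{R}(i)}$ and the scalar $\beta_{i}$ enter the self-cancellation. A secondary point is the careful accounting of the half-duplex factor $\tfrac12$ and, for DF, the structural observation underlying the $K$-versus-$K/2$ gap: AF and LC-CF require the relay only to forward a superposition, respectively a function, of the two streams, whereas DF forces the two directions to share the Time-1 MAC sum-rate.
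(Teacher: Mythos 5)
Your proposal is correct and follows essentially the same route as the paper: invoke Lemma~\ref{lemma:decoupling} to reduce the network, with probability tending to one, to $K$ isolated separated two-way relay channels whose residual interference is bounded by a constant multiple of $N_{0}$, and then carry out the standard per-pair rate accounting (self-interference cancellation with $\Theta(1)$ amplification for AF, lattice-sum computation and forwarding for LC-CF, and the Time-1 MAC sum-rate bottleneck yielding the factor $1/2$ for DF). The paper simply makes these per-pair calculations explicit (e.g., the bounds \eqref{eq:Rn_AF_final}, \eqref{eq:R_LC_def}--\eqref{eq:R_GN_def}, and \eqref{eq:DF_bound_final}), so no substantive difference or gap remains.
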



Sections \ref{sub:Amplify-and-Forward},
\ref{sub:Lattice-Code-Aided-Compute-and-F}, and
\ref{sub:Decode-and-Forward} \textcolor{black}{prove Theorem
\ref{theorem:scaling} providing detailed encoding and decoding
functions for each scheme}. In addition, Section
\ref{sub:Comparison-of-the} provides comprehensive comparisons
among the AF, LC-DF, and DF schemes in terms of the DoF
achievability.

Note that the overall procedure of the scheduling metric
calculation, RN selection, and communication protocol is analogous
for all the three schemes, and the only difference appears in the
encoding function $f_{e}$ in \eqref{eq:enc_xR} for constructing
$x_{\textrm{R}(i)}$ at the RN and the decoding function $f_{d}$ in
\eqref{eq:dec_xn} for retrieving $x_{1(i)}$ and $x_{2(i)}$ at the
CNs.
%

\subsection{\textcolor{black}{Proof of Theorem \ref{theorem:scaling} for AF}
\label{sub:Amplify-and-Forward}}

In the AF scheme, the relay retransmits the received signal with a
proper amplification. Specifically, from the received signal
$y_{\textrm{R}(i)}$ in (\ref{eq:yR}), RN $i$ generates the
transmit signal $x_{\textrm{R}}$ from
\begin{equation}
x_{\textrm{R}(i)}=\gamma_{i}\cdot
y_{\textrm{R}(i)},\label{x_Ri_AF}
\end{equation}
where $\gamma_{i}>0$ is the amplifying coefficient defined such
that the power constraint \eqref{eq:xR_power_constraint} is met.
Thus, $\gamma_{i}$ can be obtained from

\begin{equation}
\gamma_{i}=\frac{\sqrt{P}}{\sqrt{\sum_{n=1}^{2}|h_{n(i),\textrm{R}(i)}|^{2}P+|I_{\textrm{R}(i)}|^{2}+N_{0}}}.\pagebreak[0]\label{eq:gamma_def}
\end{equation}
Inserting \eqref{x_Ri_AF} into \eqref{eq:yCN} yields the received
signal at the $i$-th CN in Group $\tilde{n}$, $\tilde{n}\in\left\{
0,1\right\} $, given by
\begin{equation}
y_{\tilde{n}(i)}=\gamma_{i}h_{\tilde{n}(i),\textrm{R}(i)}\left(h_{1(i),\textrm{R}(i)}x_{1(i)}+h_{2(i),\textrm{R}(i)}x_{2(i)}+I_{\textrm{R}(i)}+z_{\textrm{R}(i)}\right)+I_{\tilde{n}(i)}+z_{\tilde{n}(i)}.\label{eq:yn_AF}
\end{equation}
The CN then subtracts the known interference signal from
$y_{\tilde{n}(i)}$ to get\pagebreak[0]
\begin{align}
 & y_{\tilde{n}(i)}-\underbrace{\gamma_{i}\cdot h_{\tilde{n}(i),\textrm{R}(i)}h_{\tilde{n}(i),\textrm{R}(i)}x_{\tilde{n}(i)}}_{\textrm{known interference}}\label{eq:y_ni_AF}\\
 & \,\,\,\,\,\,\,\,\,\,\,\,\,\,\,\,=\gamma_{i}h_{\tilde{n}(i),\textrm{R}(i)}h_{n(i),\textrm{R}(i)}x_{n(i)}+\gamma_{i}h_{\tilde{n}(i),\textrm{R}(i)}I_{\textrm{R}(i)}+\gamma_{i}h_{\tilde{n}(i),\textrm{R}(i)}z_{\textrm{R}(i)}+I_{\tilde{n}(i)}+z_{\tilde{n}(i)},\label{eq:y_ni_AF2}
\end{align}
where $\tilde{n}=3-n$. Note here that unlike the DF or LC-CF
scheme, the $i$-th pair of CNs should have the knowledge of the
effective channel $\gamma_{i}\cdot
h_{n(i),\textrm{R}(i)}h_{\tilde{n}(i),\textrm{R}(i)}$.

From \eqref{eq:y_ni_AF2}, the achievable rate for $x_{n(i)}$ is
given by
\begin{align}
R_{n(i)} &
=\frac{1}{2}\log\left(1+\frac{\gamma_{i}^{2}|h_{n(i),\textrm{R}(i)}|^{2}|h_{\tilde{n}(i),\textrm{R}(i)}|^{2}P}{\gamma_{i}^{2}|h_{\tilde{n}(i),\textrm{R}(i)}|^{2}|I_{\textrm{R}(i)}|^{2}+|I_{\tilde{n}(i)}|^{2}+\left(\gamma_{i}^{2}|h_{\tilde{n}(i),\textrm{R}(i)}|^{2}+1\right)N_{0}}\right).
\end{align}
With $N=\omega\left(\textrm{SNR}^{2(K-1)}\right)$, Lemma
\ref{lemma:decoupling} gives us
\begin{equation}
|I_{\textrm{R}(i)}|^{2},|I_{1(i)}|^{2},|I_{2(i)}|^{2}<\epsilon
N_{0}\label{eq:interf_limit}
\end{equation}
for any $\epsilon>0$ with probability $\mathcal{P}_{C}$. Thus, for
any $\epsilon>0$, the{} achievable rate is bounded by

\begin{align}
R_{n(i)} & \ge\mathcal{P}_{C}\cdot\frac{1}{2}\log\left(1+\frac{\gamma_{i}^{2}|h_{n(i),\textrm{R}(i)}|^{2}|h_{\tilde{n}(i),\textrm{R}(i)}|^{2}P}{\left(\gamma_{i}^{2}|h_{\tilde{n}(i),\textrm{R}(i)}|^{2}+1\right)\epsilon N_{0}+\left(\gamma_{i}^{2}|h_{\tilde{n}(i),\textrm{R}(i)}|^{2}+1\right)N_{0}}\right)\label{eq:Rn_AF_final0}\\
{\color{black}} &
=\mathcal{P}_{C}\cdot\frac{1}{2}\log\left(1+\frac{\gamma_{i}^{2}|h_{n(i),\textrm{R}(i)}|^{2}|h_{\tilde{n}(i),\textrm{R}(i)}|^{2}}{\underset{\triangleq
I^{\prime}}{\underbrace{\left(\epsilon+1\right)\left(\gamma_{i}^{2}|h_{\tilde{n}(i),\textrm{R}(i)}|^{2}+1\right)}}}\cdot\frac{P}{N_{0}}\right),\label{eq:Rn_AF_final}
\end{align}
where in \eqref{eq:Rn_AF_final0}, it is assumed that zero rate is
achieved unless the condition $\sum_{i=1}^{K}\Delta_{i}<\epsilon$
holds as in Lemma \ref{lemma:decoupling}. Inserting
\eqref{eq:interf_limit} into \eqref{eq:gamma_def} gives us
\begin{align}
\lim_{\textrm{SNR}\rightarrow\infty}\gamma_{i} &
\ge\lim_{\textrm{SNR}\rightarrow\infty}\frac{1}{\sqrt{\sum_{n=1}^{2}|h_{n(i),\textrm{R}(i)}|^{2}+(\epsilon+1)\textrm{SNR}^{-1}}}=\frac{1}{\sqrt{\sum_{n=1}^{2}|h_{n(i),\textrm{R}(i)}|^{2}}},{\color{black}\pagebreak[0]}\label{eq:R_AF_final-1}
\end{align}
while inserting $\left|I_{\textrm{R}(i)}\right|^{2}=0$ into
\eqref{eq:gamma_def} yields
$\lim_{\textrm{SNR}\rightarrow\infty}\gamma_{i}\le\lim_{\textrm{SNR}\rightarrow\infty}\frac{1}{\sqrt{\sum_{n=1}^{2}|h_{n(i),\textrm{R}(i)}|^{2}+\textrm{SNR}^{-1}}}$.
Thus, we have
$\lim_{\textrm{SNR}\rightarrow\infty}\gamma_{i}=\frac{1}{\sqrt{\sum_{n=1}^{2}|h_{n(i),\textrm{R}(i)}|^{2}}}$
and hence
\begin{align}
\lim_{\textrm{SNR}\rightarrow\infty}I^{\prime} & =\lim_{\textrm{SNR}\rightarrow\infty}\frac{|h_{n(i),\textrm{R}(i)}|^{2}|h_{\tilde{n}(i),\textrm{R}(i)}|^{2}}{\left(\epsilon+1\right)\left(|h_{\tilde{n}(i),\textrm{R}(i)}|^{2}+1/\gamma_{i}^{2}\right)}\label{eq:R_AF_final-1-1}\\
 & =\frac{|h_{n(i),\textrm{R}(i)}|^{2}|h_{\tilde{n}(i),\textrm{R}(i)}|^{2}}{\left(\epsilon+1\right)\left(|h_{\tilde{n}(i),\textrm{R}(i)}|^{2}+\sqrt{\sum_{n=1}^{2}|h_{n(i),\textrm{R}(i)}|^{2}}\right)}\triangleq\hat{I}.{\color{black}\pagebreak[0]}\label{eq:eq:R_AF_final-1-2}
\end{align}

Therefore, the achievable DoF for the AF scheme is given by
\begin{align}
\textrm{DoF}_{\textrm{AF}} & =\lim_{\textrm{SNR}\rightarrow\infty}\frac{\sum_{i=1}^{K}\sum_{n=1}^{2}R_{n(i)}}{\log(\textrm{SNR})}\label{eq:DoF_AF_derivation0}\\
 & \ge\frac{\sum_{i=1}^{K}\sum_{n=1}^{2}\left[\lim_{\textrm{SNR}\rightarrow\infty}\mathcal{P}_{C}\cdot\lim_{\textrm{SNR}\rightarrow\infty}\frac{1}{2}\log\left(1+I^{\prime}\cdot\textrm{SNR}\right)\right]}{\lim_{\textrm{SNR}\rightarrow\infty}\log\textrm{SNR}}\\
 & =\frac{\sum_{i=1}^{K}\sum_{n=1}^{2}1\cdot\lim_{\textrm{SNR}\rightarrow\infty}\frac{1}{2}\log\left(1+I^{\prime}\cdot\textrm{SNR}\right)}{\lim_{\textrm{SNR}\rightarrow\infty}\log\textrm{SNR}}\label{eq:DoF_AF_dervation1}\\
 & =\frac{\sum_{i=1}^{K}\sum_{n=1}^{2}\left[\lim_{\textrm{SNR}\rightarrow\infty}\frac{1}{2}\log\left(\textrm{SNR}\right)+\lim_{\textrm{SNR}\rightarrow\infty}\frac{1}{2}\log\left(\frac{1}{\textrm{SNR}}+I^{\prime}\right)\right]}{\lim_{\textrm{SNR}\rightarrow\infty}\log\textrm{SNR}}\\
 & =\frac{\sum_{i=1}^{K}\sum_{n=1}^{2}\left[\lim_{\textrm{SNR}\rightarrow\infty}\frac{1}{2}\log\left(\textrm{SNR}\right)+\frac{1}{2}\log\left(0+\hat{I}\right)\right]}{\lim_{\textrm{SNR}\rightarrow\infty}\log\textrm{SNR}}\label{eq:eq:DoF_AF_derivation2}\\
 & =K,\label{eq:DoF_AF_derivation_final}
\end{align}
where \eqref{eq:DoF_AF_dervation1} and
\eqref{eq:eq:DoF_AF_derivation2} follow from Lemma
\ref{lemma:decoupling} and \eqref{eq:eq:R_AF_final-1-2},
respectively. On the other hand, the cut-set outer bound
\cite{H_Yang11_TIT}, for which no inter-node interference is
assumed, yields the upper bound $\textrm{DoF}_{\textrm{AF}}\le K$.
Therefore, the achievable DoF with the AF scheme is
$\textrm{DoF}_{\textrm{AF}}=K$, which proves the theorem for
\eqref{eq:DOF_all}.

\subsection{\textcolor{black}{Proof of Theorem \ref{theorem:scaling} for LC-CF}\label{sub:Lattice-Code-Aided-Compute-and-F}}

The LC-CF scheme is a generalized version of the modulo-2 network
coding, in which $x_{1(i)},x_{2(i)}\in\left\{ 0,1\right\} $ and
where $x_{\textrm{R}(i)}=\left[x_{1(i)}+x_{2(i)}\right]_{2}$ is
retransmitted in Time 2. Specifically, in Time 1, $x_{1(i)}$ and
$x_{2(i)}$ are encoded using lattice codes such that
$\left[h_{1(i),\textrm{R}(i)}x_{1(i)}+h_{2(i),\textrm{R}(i)}x_{2(i)}\right]_{\Lambda}$
falls into one of the lattice points in some lattice $\Lambda$.
\textcolor{black}{The encoding functions that generate $x_{1(i)}$
and $x_{2(i)}$ are dependent on the channel coefficients
$h_{1(i),\textrm{R}(i)}$ and $h_{2(i),\textrm{R}(i)}$. Thus, it is
reasonable to assume that the relay designs the encoding functions
and forwards the information on them to the communication nodes,
since the relay can easily acquire $h_{1(i),\textrm{R}(i)}$ and
$h_{2(i),\textrm{R}(i)}$ using the pilot signals transmitted by
the CNs.}

Taking the modulo-$\Lambda$ to the received signal
$y_{\textrm{R}(i)}$ in \eqref{eq:yR}, the RN obtains
\begin{equation}
\left[y_{\textrm{R}(i)}\right]_{\Lambda}=\left[h_{1(i),\textrm{R}(i)}x_{1(i)}+h_{2(i),\textrm{R}(i)}x_{2(i)}+I_{\textrm{R}(i)}+z_{\textrm{R}(i)}\right]_{\Lambda},
\end{equation}
and retrieves the estimate of
$\left[h_{1(i),\textrm{R}(i)}x_{1(i)}+h_{2(i),\textrm{R}(i)}x_{2(i)}\right]_{\Lambda}$
via lattice decoding \cite{H_Yang11_TIT,H_Yang12_JSAC}. More
detailed procedures for constructing $x_{1(i)}$, $x_{2(i)}$, and
$\Lambda$ are omitted, since they are analogous to those for the
three-node TWR channel \cite{H_Yang11_TIT,B_Nazer11_IEEE}, except
that the considered channel includes inter-node interference terms
such as $I_{\textrm{R}(i)}$, $I_{1(i)}$, and $I_{2(i)}$. The RN
then transmits the retrieved signal
$x_{\textrm{R}(i)}=\left[h_{1(i),\textrm{R}(i)}x_{1(i)}+h_{2(i),\textrm{R}(i)}x_{2(i)}\right]_{\Lambda}$,
and then the $i$-th CN in Group $n$ obtains $x_{\tilde{n}(i)}$ in
Time 2 following the two procedures: i) estimating
$x_{\textrm{R}(i)}$ from \eqref{eq:yCN} via lattice decoding, ii)
obtaining $x_{\tilde{n}(i)}$ with known $x_{\textrm{R}(i)}$ and
$x_{n(i)}$ from
$x_{\tilde{n}(i)}=\frac{1}{h_{\tilde{n}(i),\textrm{R}(i)}}\left[x_{\textrm{R}(i)}-h_{n(i),\textrm{R}(i)}x_{n(i)}\right]_{\Lambda}$.

For this lattice encoding and decoding, it is known that the
achievable rates for Time 1 are given by \cite{H_Yang11_TIT}
\begin{align}
R_{n(i)} &
\le\left[\frac{1}{2}\log\left(\tau_{n(i)}+\frac{|h_{n(i),\textrm{R}(i)}|^{2}P}{|I_{\textrm{R}(i)}|^{2}+N_{0}}\right)\right]^{+},\hspace{5pt}n=1,2,\label{eq:R_LC_def}
\end{align}
where $[x]^{+}=\max\{x,0\}$ and
$\tau_{n(i)}\triangleq|h_{n(i),\textrm{R}(i)}|^{2}/\left(|h_{1(i),\textrm{R}(i)}|^{2}+|h_{2(i),\textrm{R}(i)}|^{2}\right)$.
In Time 2, the achievable rate is determined when estimating
$x_{\textrm{R}(i)}$ from \eqref{eq:yCN} \cite{H_Yang11_TIT} as
\begin{align}
R_{n(i)} &
\le\frac{1}{2}\log\left(1+\frac{\left|h_{\tilde{n}(i),\textrm{R}(i)}\right|^{2}P}{\left|I_{\tilde{n}(i)}\right|^{2}+N_{0}}\right).\label{eq:R_GN_def}
\end{align}
With $N=\omega\left(\textrm{SNR}^{2(K-1)}\right)$, Lemma
\ref{lemma:decoupling} gives us
$|I_{\textrm{R}(i)}|^{2},|I_{1(i)}|^{2},|I_{2(i)}|^{2}<\epsilon
N_{0}$ with probability $\mathcal{P}_{C}$. In addition, the
maximum rate of $R_{n(i)}$ is bounded by the minimum of the two
bounds in \eqref{eq:R_LC_def} and \eqref{eq:R_GN_def}. Thus, for
$N=\omega\left(\textrm{SNR}^{2(K-1)}\right)$,
the maximum rate is given by %

\begin{align}
R_{n(i)} & =\min\left\{ \left[\frac{1}{2}\log\left(\tau_{n(i)}+\frac{|h_{n(i),\textrm{R}(i)}|^{2}P}{|I_{\textrm{R}(i)}|^{2}+N_{0}}\right)\right]^{+},\frac{1}{2}\log\left(1+\frac{\left|h_{\tilde{n}(i),\textrm{R}(i)}\right|^{2}P}{\left|I_{\tilde{n}(i)}\right|^{2}+N_{0}}\right)\right\} \label{eq:R_GN_def-1}\\
{\color{black}} & \ge\min\left\{ \mathcal{P}_{C}\cdot\frac{1}{2}\log\left(\tau_{n(i)}+\frac{|h_{n(i),\textrm{R}(i)}|^{2}P}{(1+\epsilon)N_{0}}\right),\mathcal{P}_{C}\cdot\frac{1}{2}\log\left(1+\frac{|h_{\tilde{n}(i),\textrm{R}(i)}|^{2}}{1+\epsilon}\textrm{SNR}\right)\right\} \label{eq:Rn_LC_2-1}\\
 & {\color{black}=}\min\left\{ \mathcal{\mathcal{P}_{C}}\cdot\left(\frac{1}{2}\log(\textrm{SNR})+o_{1}(\textrm{SNR})\right){\color{black},}\mathcal{P}_{C}\cdot\left(\frac{1}{2}\log(\textrm{SNR})+o_{2}(\textrm{SNR})\right)\right\} ,\label{eq:Rn_LC_2_2}
\end{align}
where
$o_{1}(\textrm{SNR})=\frac{1}{2}\log\left(\tau_{n(i)}\textrm{SNR}^{-1}+\frac{|h_{n(i),\textrm{R}(i)}|^{2}}{(1+\epsilon)}\right)$
and
$o_{2}(\textrm{SNR})=\frac{1}{2}\log\left(\textrm{SNR}^{-1}+\frac{|h_{\tilde{n}(i),\textrm{R}(i)}|^{2}}{1+\epsilon}\right)$.

Therefore, with $N=\omega\left(\textrm{SNR}^{2(K-1)}\right)$,
inserting \eqref{eq:Rn_LC_2_2} to (\ref{eq:DoF_def}) and following
the analogous derivation from \eqref{eq:DoF_AF_derivation0} to
\eqref{eq:DoF_AF_derivation_final} give us
$\textrm{DoF}_{\textrm{LC-CF}}=K$, which proves Theorem
\ref{theorem:scaling}.

\begin{remark}\label{remark:subopt_LC}

\textcolor{black}{Optimal lattice coding that achieves Shannon's
capacity bound of $\log(1+SNR)$ may require excessive
computational complexity in the code construction
\cite{U_Erez04_TIT}. Particularly, analytical methods for shaping
the Voronoi region of each lattice point to be a hyper-sphere is
unknown. However, sacrificing this shaping gain by 1.53 dB in SNR,
one can easily design lattice codes with practical non-binary
codes such as low-density parity check codes
\cite{G_Forney98_TIT}, or binary multilevel turbo codes
\cite{U_Wachsmann99_TIT}. For more detailed discussion on the
implementation of lattice codes, the readers are referred to
\cite{H_Yang12_TC} and references therein, or to
\cite{C_Feng13_TIT} and references therein for the effort to
implement practically-tailored lattice codes in two-way relay
channels. }

\end{remark}

\subsection{\textcolor{black}{Proof of Theorem \ref{theorem:scaling} for DF}\label{sub:Decode-and-Forward}}

In the DF scheme, each of $x_{1(i)}$ and $x_{2(i)}$ is
successively decoded at RN $i$ in Time 1 from \eqref{eq:yR}. That
is, $x_{1(i)}$ is decoded first regarding the rest of the terms in
\eqref{eq:yR},
$h_{2(i),\textrm{R}(i)}x_{2(i)}+I_{\textrm{R}(i)}+z_{\textrm{R}(i)}$,
as a noise term, and then is subtracted from $y_{\textrm{R}(i)}$
to decode $x_{2(i)}$. On the other hand, $x_{2(i)}$ can be decoded
first regarding
$h_{1(i),\textrm{R}(i)}x_{1(i)}+I_{\textrm{R}(i)}+z_{\textrm{R}(i)}$
as a noise term, and then subtracted. For this successive
decoding, the rates $R_{1(i)}$ and $R_{2(i)}$ are given by the
multiple-access channel rate bound \cite{B_Rankov07_JSAC} as
follows:
\begin{align}
R_{n(i)} &
\le\frac{1}{2}\log\left(1+\frac{\left|h_{n(i),\textrm{R}(i)}\right|^{2}P}{\left|I_{n(i)}\right|^{2}+N_{0}}\right),\,\,
n=1,2\label{eq:DF_bound1-1}
\end{align}
\begin{align}
R_{1(i)}+R_{2(i)} &
\le\frac{1}{2}\log\left(1+\frac{\left(|h_{1(i),\textrm{R}(i)}|^{2}+|h_{2(i),\textrm{R}(i)}|^{2}\right)P}{\left|I_{\textrm{R}(i)}\right|^{2}+N_{0}}\right).\label{eq:DF_bound2-1}
\end{align}
In Time 2, from individually decoded $x_{1(i)}$ and $x_{2(i)}$,
the network coding is used to construct $x_{\textrm{R}(i)}$ at the
RN as in the LC-CF scheme. Thus, the achievable rates for Time 2
are given again by (\ref{eq:R_GN_def}). Combining
\eqref{eq:DF_bound1-1}, \eqref{eq:DF_bound2-1}, and
\eqref{eq:R_GN_def} together, we obtain the maximum sum-rate as

\begin{align}
R_{1(i)}+R_{2(i)} & =\min\left\{ \sum_{n=1}^{2}\min\left\{
\frac{1}{2}\log\left(1+\frac{\left|h_{n(i),\textrm{R}(i)}\right|^{2}P}{\left|I_{n(i)}\right|^{2}+N_{0}}\right),\frac{1}{2}\log\left(1+\frac{\left|h_{\tilde{n}(i),\textrm{R}(i)}\right|^{2}P}{\left|I_{\tilde{n}(i)}\right|^{2}+N_{0}}\right)\right\}
,\right.\\ \nonumber
 & \left.\,\,\,\,\,\,\,\,\,\,\,\,\,\,\,\,\,\,\,\,\,\,\,\,\,\,\,\,\,\,\,\,\,\,\,\,\,\,\,\frac{1}{2}\log\left(1+\frac{\left(|h_{1(i),\textrm{R}(i)}|^{2}+|h_{2(i),\textrm{R}(i)}|^{2}\right)P}{\left|I_{\textrm{R}(i)}\right|^{2}+N_{0}}\right)\right\} .
\end{align}

From Lemma \ref{lemma:decoupling}, with
$N=\omega\left(\textrm{SNR}^{2(K-1)}\right)$, we have
$|I_{\textrm{R}(i)}|^{2},|I_{1(i)}|^{2},|I_{2(i)}|^{2}<\epsilon
N_{0}$ with probability $\mathcal{P}_{C}$. In such a case, the
maximum sum-rate is bounded by \pagebreak[0]
\begin{align}
R_{1(i)}+R_{2(i)} & \ge\mathcal{P}_{C}\cdot\min\left\{ \sum_{n=1}^{2}\min\left\{ \frac{1}{2}\log\left(1+\frac{\left|h_{n(i),\textrm{R}(i)}\right|^{2}}{\epsilon+1}\textrm{SNR}\right),\frac{1}{2}\log\left(1+\frac{\left|h_{\tilde{n}(i),\textrm{R}(i)}\right|^{2}}{\epsilon+1}\textrm{SNR}\right)\right\} ,\right.{\color{black}\pagebreak[0]}\nonumber \\
 & \left.\,\,\,\,\,\,\,\,\,\,\,\,\,\,\,\,\,\,\,\,\,\,\,\,\,\,\,\,\,\,\,\,\,\,\,\,\,\,\,\underbrace{\frac{1}{2}\log\left(1+\frac{\left(|h_{1(i),\textrm{R}(i)}|^{2}+|h_{2(i),\textrm{R}(i)}|^{2}\right)}{\epsilon+1}\textrm{SNR}\right)}_{{\color{black}\triangleq\Delta_{2}}}\right\} {\color{black}\pagebreak[0]}\label{eq:DF_bound_final0}\\
 & ={\color{black}\mathcal{P}_{C}\cdot\min\left\{ \underbrace{\log\left(1+\frac{\min\left\{ \left|h_{1(i),\textrm{R}(i)}\right|^{2},\left|h_{2(i),\textrm{R}(i)}\right|^{2}\right\} }{\epsilon+1}\textrm{SNR}\right)}_{\triangleq\Delta_{1}},\Delta_{2}\right\} }{\color{black}\pagebreak[0]}
\end{align}
\textcolor{black}{For arbitrarily large SNR and with
$h_{1(i),\textrm{R}(i)}, h_{2(i),\textrm{R}(i)}\neq 0$ , we have
$\Delta_{1}>\Delta_{2}$ since
\begin{equation}
\left(1+\frac{\min\left\{
\left|h_{1(i),\textrm{R}(i)}\right|^{2},\left|h_{2(i),\textrm{R}(i)}\right|^{2}\right\}
}{\epsilon+1}\textrm{SNR}\right)>\left(1+\frac{|h_{1(i),\textrm{R}(i)}|^{2}+|h_{2(i),\textrm{R}(i)}|^{2}}{\epsilon+1}\textrm{SNR}\right)^{1/2}.
\end{equation}
Therefore, for large SNR, the sum-rate can be further expressed
by}
\begin{equation}
{\color{black}R_{1(i)}+R_{2(i)}\ge\mathcal{P}_{C}\cdot\frac{1}{2}\log\left(1+\frac{|h_{1(i),\textrm{R}(i)}|^{2}+|h_{2(i),\textrm{R}(i)}|^{2}}{\epsilon+1}\textrm{SNR}\right).}\label{eq:DF_bound_final}
\end{equation}
Applying \eqref{eq:DF_bound_final} to (\ref{eq:DoF_def}) and
following the analogous derivation from
\eqref{eq:DoF_AF_derivation0} to
\eqref{eq:DoF_AF_derivation_final}, we can only achieve
$\textrm{DoF}_{\textrm{DF}}=K/2$, even under the
interference-limited condition, i.e.,
$N=\omega\left(\textrm{SNR}^{2(K-1)}\right)$.

\subsection{Remark of Theorem \ref{theorem:scaling}: Comparison among the AF,
DF, and LC-CF schemes\label{sub:Comparison-of-the}}

Since the AF scheme only performs power scaling at the RNs, it is
the simplest for \textcolor{black}{implementation but achieves
}the optimal DoF of the network. However, the CN-to-CN effective
channel gain should be known by the CNs, and the scheme suffers
from the noise propagation, particularly in the low SNR regime.
The DF scheme requires the minimum of the CSI, and the
conventional simple coding scheme can be used as in the AF scheme.
Since the noise at the RNs is removed from the decoding at the
RSs, it does not propagate the noise at the RSs. Nevertheless, the
scheme only achieves the half of the optimal DoF. The LC-CF scheme
attains benefits from both AF and DF schemes, i.e., the optimal
DoF and removal of the noise at the RNs through decoding. On the
other hand, the scheme requires lattice encoding and decoding, but
the design of an optimal lattice code for given channel gains
requires an excessive computational complexity
\cite{H_Yang11_TIT}. The suboptimal design of lattice codes can be
considered \textcolor{black}{as discussed in Remark
\ref{remark:subopt_LC}}.

\section{Numerical Examples}

\label{sec:sim} \pagebreak[0] For comparison, two baseline schemes
are considered: max-min-SNR and random selection schemes. In the
max-min-SNR scheme, RN selection is done such that the minimum of
the SNRs of the two channel links between the serving RN and two
CNs is maximized at each selection. \pagebreak[0]


\textcolor{black}{Figure 3 shows the sum-rates versus SNR for
$K=2$, where $N$ increases with respect to SNR according to
Theorem \ref{theorem:scaling}, i.e., $N=\textrm{SNR}^{2(K-1)}$. As
an upper-bound, the sum-rate of the proposed LC-CF ORS scheme but
with no interference is also plotted, the DoF of which is $K$. It
is seen that the proposed AF and LC-CF schemes achieve the DoF of
$K$ as derived in Theorem \ref{theorem:scaling}, whereas the
max-min and random selection schemes achieve zero DoF due to
non-vanishing interference. On the other hand, the DoF of the
proposed DF scheme achieves only $K/2,$ which also complies with
Theorem \ref{theorem:scaling}. It is interesting to note that even
the proposed LC-CF scheme cannot achieve $K$ DoF if $N$ scales
slower than $\textrm{SNR}^{2(K-1)}$, as shown in the example of
the $N=\textrm{SNR}^{(K-1)}$ case which is labeled as `Prop. LC-CF
ORS w/ $N=\textrm{SNR}^{(K-1)}$' in Fig. 3.}

Figure 4 show the sum-rates versus SNR for $K=2$ and (a) $N=20$ or
(b) $N=50$. With fixed and small $N$, the max-min-SNR schemes
outperform the proposed ORS schemes in the low SNR regime, where
the noise is dominant compared to the interference. However, the
sum-rates of the proposed schemes exceed those of the max-min
schemes as the SNR increases, because the interference becomes
dominant than the noise. As a consequence, there exist a crossover
SNR point for each case. As seen from Fig. 4, these crossover
points becomes low as $N$ grows, since the proposed schemes
exploit more benefit as $N$ increases. The proposed schemes
outperform the max-min-SNR schemes for the SNR greater than 7 dB
with $N=50$ as shown in Fig. \ref{fig:rates_SNR_N50}.

Figure \ref{fig:rates_N} shows the sum-rates versus $N$ when $K=2$
and SNR is 20 dB. It is seen that the proposed ORS scheme greatly
enhances the sum-rate of the max-min-SNR scheme for all the cases.
The LC-CF scheme exhibits the highest sum-rates amongst the three
relay schemes for mid-to-large $N$ regime, whereas it slightly
suffers from the rate loss due to $\tau_{n(i)}\le1$ in
(\ref{eq:R_LC_def}) in the small $N$ regime. The sum-rate of the
proposed AF scheme becomes higher than that of the DF scheme as
$N$ increases, because the AF achieves higher DoF, as shown in
Theorem \ref{theorem:scaling}.
%

\clearpage{}

\begin{center}
\begin{figure}
\centering \includegraphics[width=0.7\textwidth]{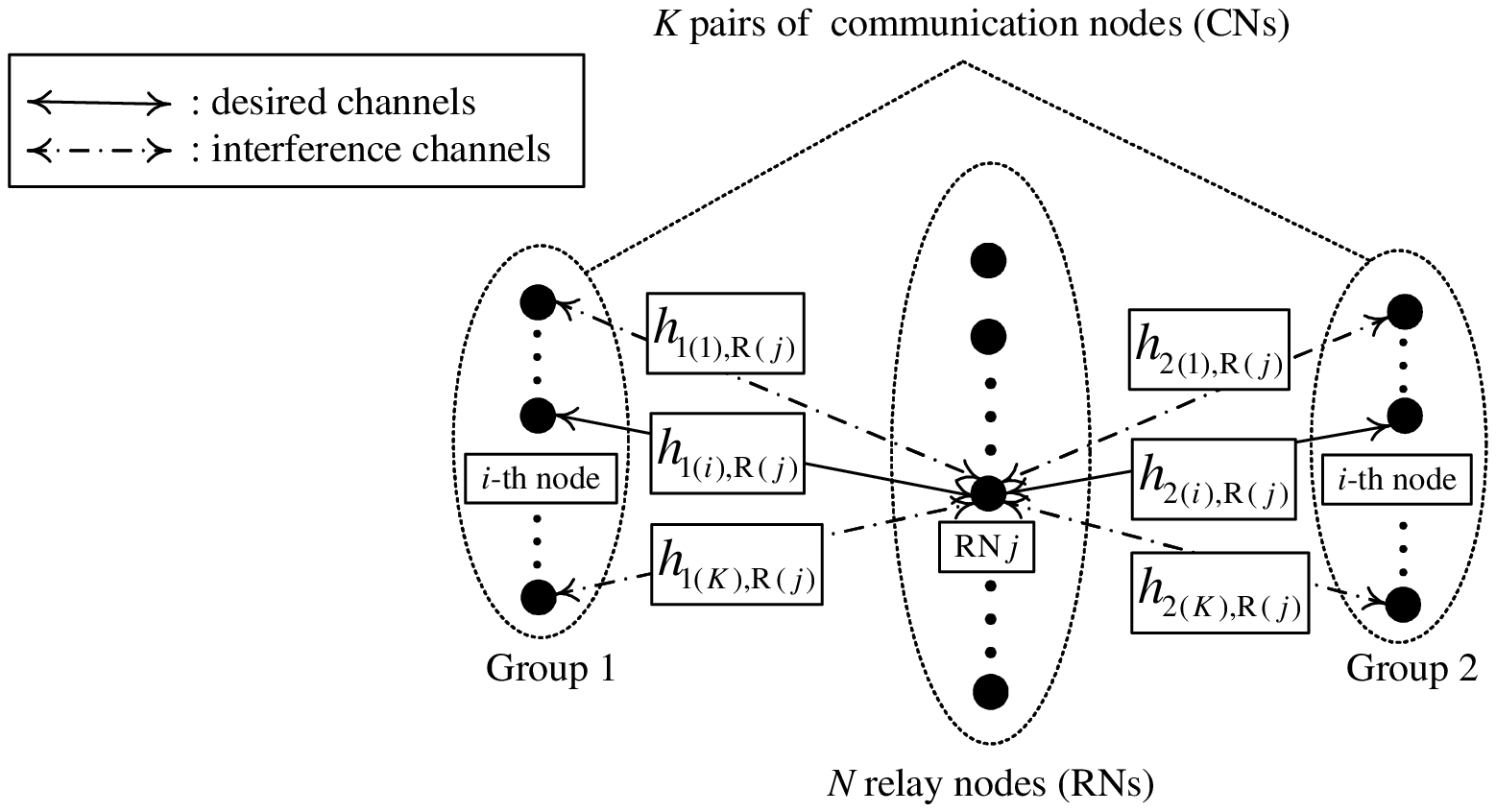}\\
 \protect\protect\caption{The $K\times N\times K$ interfering two-way relay network.}

\label{fig:system}
\end{figure}

\par\end{center}

\begin{center}
\begin{figure}
\centering \includegraphics[width=1\textwidth]{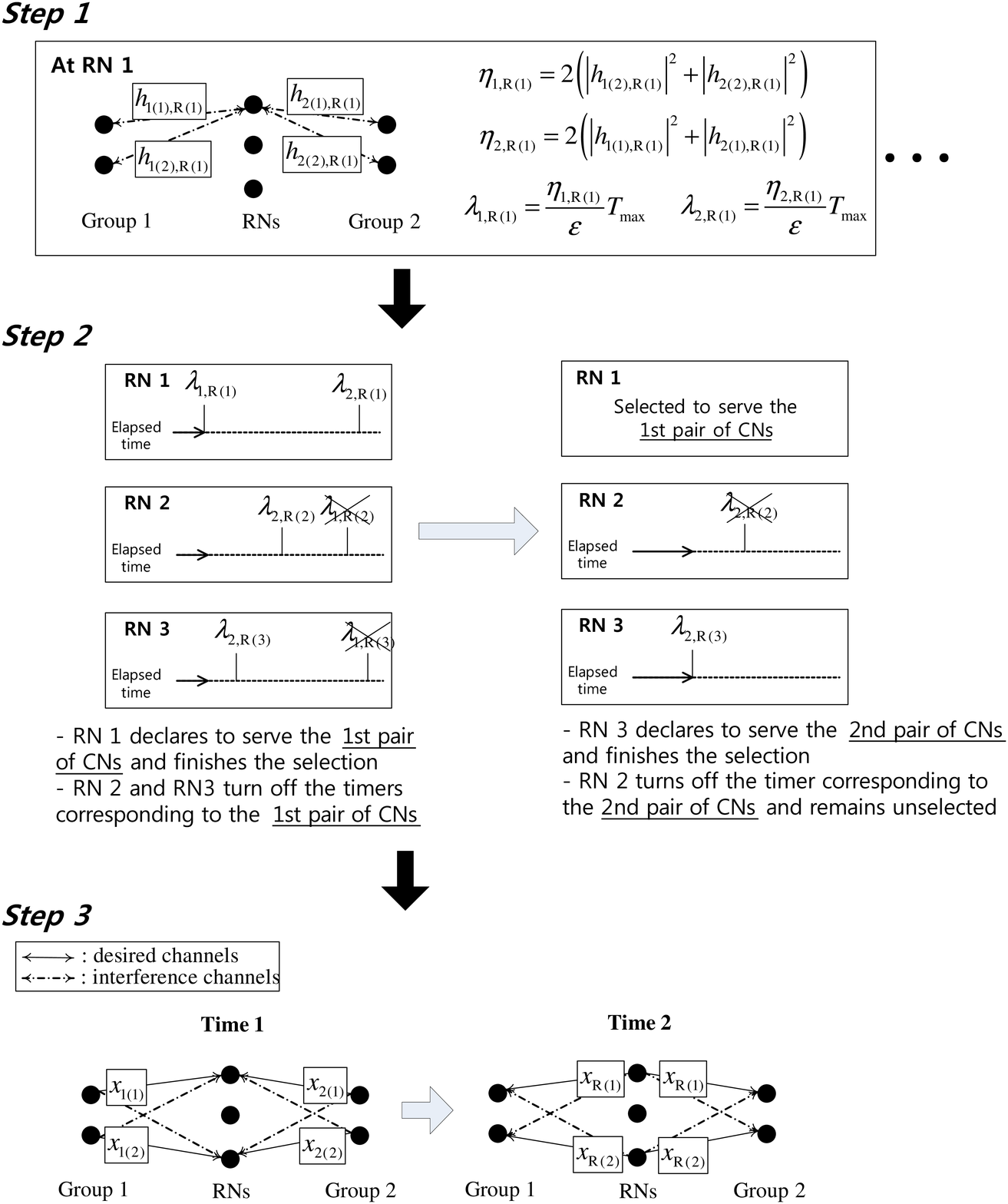}\\
 \protect\protect\caption{Overall procedure of the proposed scheme for $2\times3\times2$ interfering
two-way relay network.}

\label{fig:overall}
\end{figure}

\par\end{center}

\clearpage{}

\begin{center}
%
%
%

\par\end{center}

\begin{center}
\begin{figure}
\begin{centering}
\centering \includegraphics[width=0.7\textwidth]{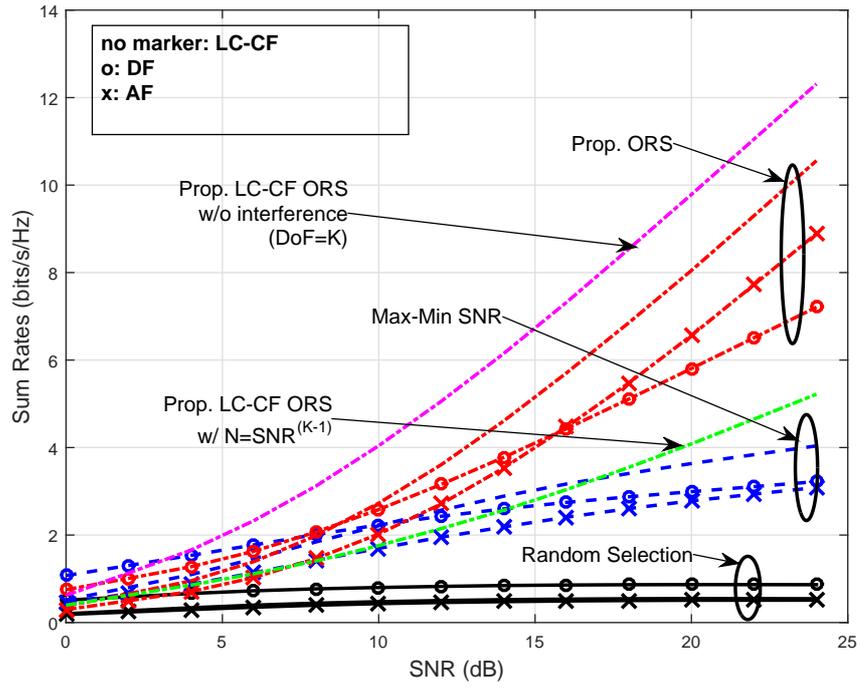}\\
 \protect\protect\caption{Rates versus SNR with $K=2$ and $N=\textrm{SNR}^{2(K-1)}$.}

\par\end{centering}

\label{fig:rate_SNR_N_varying}
\end{figure}

\par\end{center}

\begin{figure}
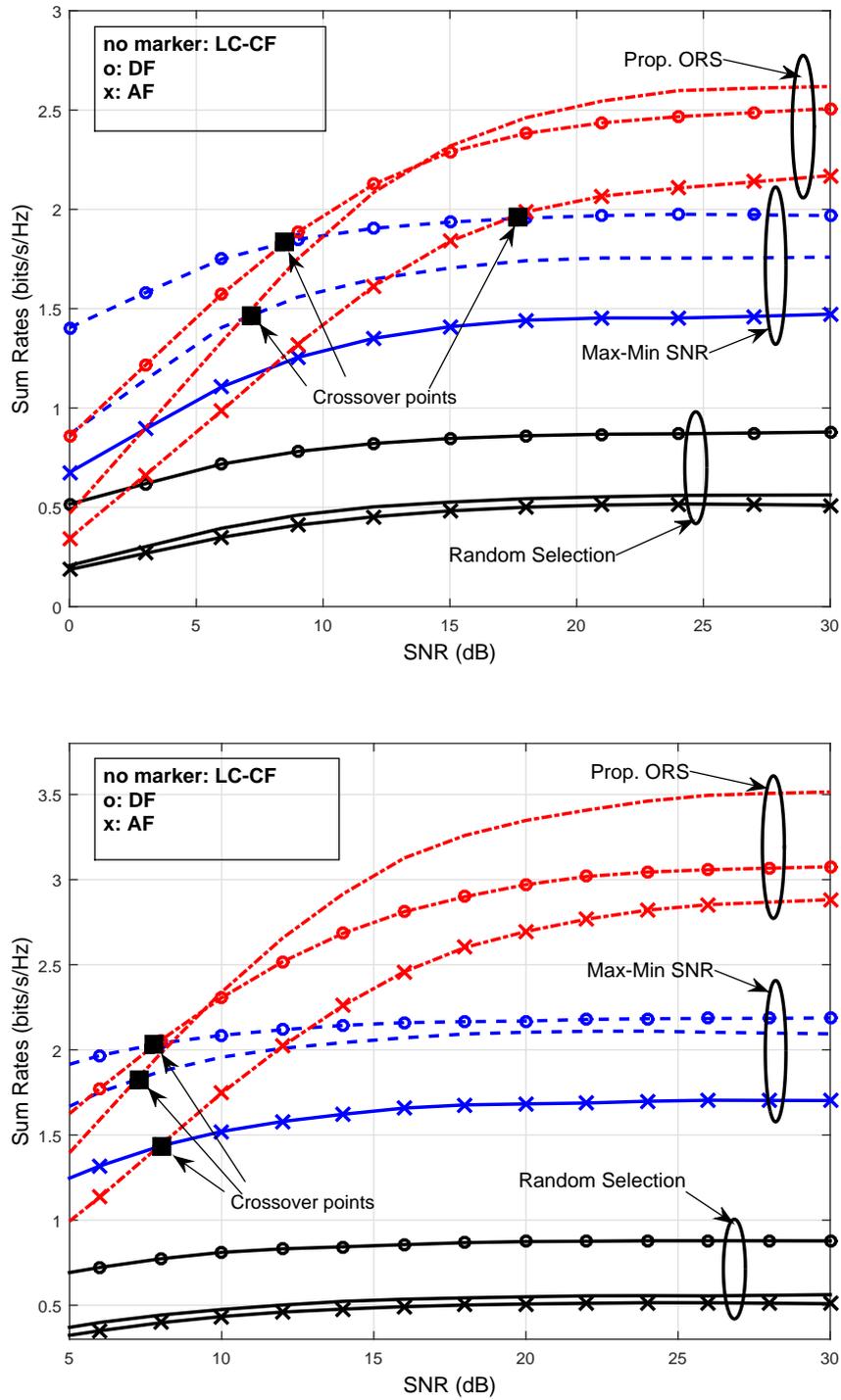

\begin{centering}
\subfigure{
\includegraphics[width=0.7\textwidth]{rates_SNR_N20.eps}\label{fig:rates_SNR_N20}
} \\
\vspace{20pt} \subfigure{
\includegraphics[width=0.7\textwidth]{rates_SNR_N50.eps}
\label{fig:rates_SNR_N50}} \protect\protect\caption{Rates versus
SNR when $K=2$ and (a) $N=20$ or (b) $N=50$.}

\par\end{centering}

\centering{}\label{fig:rates_SNR}
\end{figure}

\clearpage{}

\begin{center}
\begin{figure}
\centering \includegraphics[width=0.7\textwidth]{rates_N.eps}\\
 \protect\protect\caption{Rates versus $N$ when $K=2$ and SNR=20dB. }

\label{fig:rates_N}
\end{figure}

\par\end{center}

\end{document}